\documentclass[reprint,prx,floatfix,nofootinbib,superscriptaddress]{revtex4-1}
\usepackage{color}
\usepackage[utf8]{inputenc}
\usepackage[T1]{fontenc}

\usepackage[colorlinks=true]{hyperref}
\usepackage{graphicx}
\usepackage{dcolumn}
\usepackage{bm}
\usepackage{amsmath,amssymb,amsthm} 
\usepackage{amsfonts}
\usepackage{comment}
\usepackage[usenames,dvipsnames]{pstricks}
\usepackage{subfigure}
\usepackage{braket}
\usepackage{float}
\usepackage{xcolor}
\usepackage{todonotes}
\newtheorem{proposition}{Proposition}

\begin{document}

\title{Connecting Hodge and Sakaguchi-Kuramoto: a mathematical framework for coupled oscillators on simplicial complexes}

\author{Alexis Arnaudon}
\affiliation{Department of Mathematics, Imperial College, London SW7 2AZ, UK}
\affiliation{Blue Brain Project, École Polytechnique Fédérale de Lausanne (EPFL), Campus Biotech, 1202 Geneva, Switzerland}
\author{Robert L. Peach}
\affiliation{Department of Mathematics, Imperial College, London SW7 2AZ, UK}
\affiliation{Department of Neurology, University Hospital Würzburg, Würzburg, Germany}
\author{Giovanni Petri}
\affiliation{ISI Foundation, via Chisola 5, Turin, Italy}
\affiliation{ISI Global Science Foundation, 33 W 42nd St, 10036 New York NY, USA}
\author{Paul Expert}
\affiliation{Global Business School for Health, University College, London, WC1E 6BT, UK}
\affiliation{Department of Primary Care and Public Health, Imperial College, London SW7 2AZ, UK}
\affiliation{World Research Hub Initiative, Tokyo Institute of Technology, Tokyo, JP}

\begin{abstract}

We formulate a general Kuramoto model on weighted simplicial complexes where phases oscillators are supported on simplices of any order $k$. 
Crucially, we introduce linear and non-linear frustration terms that are independent of the orientation of the $k+1$ simplices, providing a natural generalization of the Sakaguchi-Kuramoto model. 
In turn, this provides a generalized formulation of the Kuramoto higher-order parameter as a potential function to write the dynamics as a gradient flow.
With a selection of simplicial complexes of increasingly complex structure, we study the properties of the dynamics of the simplicial Sakaguchi-Kuramoto model with oscillators on edges to highlight the complexity of dynamical behaviors emerging from even simple simplicial complexes. 
We place ourselves in the case where the vector of internal frequencies of the edge oscillators lies in the kernel of the Hodge Laplacian, or vanishing linear frustration, and, using the Hodge decomposition of the solution, we understand how the nonlinear frustration couples the dynamics in orthogonal subspaces.
We discover various dynamical phenomena, such as the partial loss of synchronization in subspaces aligned with the Hodge subspaces and the emergence of simplicial phase re-locking in regimes of high frustration.

\end{abstract}

\maketitle

\section{Introduction}

Synchronisation is an ubiquitous phenomenon observed in many complex systems across spatial and temporal scales~\cite{Arenas:2008ku}, from the firing patterns of neurons and the communication of fireflies, to the flow of traffic~\cite{Petri:2013bz,OKeefe:1971bj,Hafting:2005dp}.
One of the most popular dynamical systems, capable of reproducing a wide range of observed synchronisation behaviours, is the Kuramoto model of coupled oscillators
~\cite{kuramoto1975self,acebron2005kuramoto,rodrigues2016kuramoto}.
Whilst the model was originally formulated in terms of all-to-all interacting oscillators, the interactions between oscillators are commonly considered inhomogeneous and represented with a graph, whose structure affects the resulting dynamics. 
For example, while the full synchronisation of the oscillator population is usually a strong attractor for the dynamics irrespective of the underlying graph ~\cite{Arenas:2008ku,rodrigues2016kuramoto}, the transient dynamics on the path towards synchronisation can reveal the modular structure of the oscillators' interactions~\cite{Arenas:2006ba}.

Beyond the structure of oscillator interactions, other variations of the Kuramoto model have been studied extensively, including: time-delayed interactions~\cite{yeung1999time,Hellyer:2015ci}, oriented or signed interactions~\cite{hong2011kuramoto,delabays2019kuramoto}, time-varying parameters, stochasticity, and more, see~\cite{Arenas:2008ku} for a comprehensive review.
Of particular interest for this study, the introduction of a frustration parameter~\cite{sakaguchi1986soluble} in the nonlinear term of the Kuramoto model, then known as the Sakaguchi-Kuramoto model, can produce rich dynamics~\cite{Abrams:2004hq,Shanahan:2010go,omel2012nonuniversal,nicosia2013remote} and appears in many applications~\cite{wiesenfeld1996synchronization,filatrella2008analysis}. 

Recently, the study of higher-order interactions between elements of a system, that is, models with interactions involving more than two nodes, has garnered momentum and interest~\cite{Battiston:2020kp}.
Higher-order interactions are typically represented with hypergraphs or simplicial complexes, both of which generalize the graph representation of pairwise interactions to instead encode three-, four- and higher-way interactions. 
Naturally, extensions of well known dynamical systems have been proposed to investigate the effect of higher order interactions on their behavior~\cite{iacopini2019simplicial,Carletti:2020ux,schaub2020random, millan2020explosive, DeVille_2021,Ghorbanchian_2021}. 

The Kuramoto model --being a paradigmatic model for synchronization phenomena-- is no exception. 
In this case, however, there are two main avenues to extend classical oscillator models to higher-order. 
The first approach maintains the usual setup of phases defined on nodes of a systems and upgrades the interactions to the polyadic case, e.g. using simplicial complexes as the underlying connectivity structure.
Recent works investigated variations of this node Kuramoto model with higher-order interactions introducing various types of coupling terms~\cite{Skardal:2019ik,Skardal:2020fl}.
These models display a rich variety of synchronization and desynchronization phenomena, as well as multi-stable behavior.
The second approach instead promotes phase variables from nodes to higher-order simplices, thus defining phases for edges, triangles, and all higher-order interactions, coupled by boundary operators as generalized incidence matrices.
Pioneering work in this direction, \cite{millan2020explosive} showed that the edge dynamics projected onto the nodes and faces possesses explosive synchronization properties when specific nonlinear and non-local couplings are introduced between the two projections. More recently, also a version of the same model with local coupling between orders was introduced~\cite{calmon2021topological}.

In this paper, we extend this latter \textit{simplicial} Kuramoto model~\cite{millan2020explosive} to include: i) weights on any simplices with a precise mathematical formulation based on discrete differential geometry; and, more importantly, ii) linear and nonlinear frustrations. 
We will refer to the former --\textit{linear}-- frustrations as natural frequencies yielding non-fully synchronized stationary states, and to the latter --\textit{non-linear}-- as the higher-order generalization of the Sakaguchi-Kuramoto model~\cite{sakaguchi1986soluble}. 
The difficulty of introducing proper nonlinear frustration comes from the orientation of the simplices which make, even a naive frustration, orientation dependent. 
Here, inspired by previous work on higher-order random walks~\cite{schaub2020random}, we lift the simplices to double their numbers with opposite signs, obtaining an equivalent formulation without frustration and an orientation independent frustration.
We then study the resulting frustrated simplicial Kuramoto model on edges with numerical simulations of oscillators which internal frequency vector lies in the kernel of the Hodge Laplacian, using several measures to quantify the type of dynamics, such as Hodge decomposition, the order parameter and the largest Lyapunov exponent. 
models of identical oscillators which internal frequency vector lies in the kernel of the Hodge Laplacian

\section{Theory}
\subsection{Simplicial complexes and Hodge Laplacian}\label{sec:sc_hl}

The central elements of the mathematical formulation of the Kuramoto model on simplicial complexes are the boundary operators and the related Hodge Laplacians, which are, respectively, generalizations to higher order structures of the graph incidence matrices and of the Laplacian operator. 
We briefly review the main concepts we will use in our work following~\cite{grady2010discrete}, see also \cite{DeVille_2021}, with additional details in Appendix~\ref{discrete_geo} . 
A $k$-simplex is defined by a set of $k+1$ nodes (a 1-simplex is an edge, a 2-simplex is a triangle, etc.).
A simplicial complex is defined as a set of simplices in which every face of a simplex is also a simplex.
For our purposes, the relevant connectivity between $k-$simplices will be that induced by sharing a $(k-1)$-simplex as a face, e.g. triangles  sharing an edge, or by being faces of a $(k+1)$-simplex, e.g. edges belonging to the same triangle.
A $k$-chain within a simplicial complex is a linear combination of $k$-simplices.
We denote by $n_k$ the number of $k$-simplices of a complex, which is also the dimension of the $k$-chains and $k$-cochains vector spaces, dual to $k$-chains.
The coboundary operator $N_k$ and its dual $N_k^*$ on a simplicial complex are defined using the generalized incidence matrices $B_k^T\in M^{n_{k}\times n_{k+1}}$ which encode the topology of a simplicial complex, and the weight matrices $W_k$, which are diagonal matrices of the  $k$-simplices weights
\begin{align}
    N_k = B_k\, , \qquad N_k^* = W_{k} B_k^T W^{-1}_{k+1}\, . 
\end{align}
The weight matrices $W_k$ can be chosen in an \textit{ad-hoc} fashion and no formal relations need to exist between the different order $k$. The only relative constraint is for the weights to be positive in order to remain in the realm of unsigned graphs. Note that our notation follows~\cite{grady2010discrete} which differs from the convention commonly used for these operators.
Both act on $k$-cochains, defined as linear functional on the space of $k$-chains, see Appendix~\ref{discrete_geo}.
The Hodge Laplacian of order $k$ can then be written as 
\begin{align}
    L_k &= L_k^{down}+L_k^{up} \label{eq:up_down_Lk} \\
    &:=N_{k-1} N_{k-1}^* + N_k^* N_k\, . \label{eq:weighted_Lk}
\end{align}
For $k=0$, $W_0 = I$ and $W_1 = I$, we obtain the graph Laplacian $L_0=D-A$ with $A$ the simplicial complex 1-skeleton, namely the graph node adjacency matrix, and $D$ the diagonal matrix of the nodes degree. The choice $W_0 = D^{-1}$ defines the normalized graph Laplacian $L_0^{norm} = I - D^{-1}A$.

The graph Laplacian $L_0$ can produce two types of dynamics. 
When acting on the left of a distribution $f$, it yields the consensus dynamics $\dot f = L_0 f$ for any choice of $W_1$ while by acting on the right, it corresponds to the diffusion dynamics $\dot p = pL_0$.
Equally, both types of dynamics also exist for the edge Laplacian $L_1$~\cite{muhammad2006control,schaub2020random}, defined as
\begin{align}
    L_1 = B_0 W_0 B_0^T W^{-1}_1 + W_1 B_1^T W_2^{-1} B_1 \, .
    \label{eq:weighted_L1}
\end{align}
We refer to Appendix~\ref{diffusion_kuramoto} for the diffusion formulation of the weighted simplicial Kuramoto model. For the remainder of this paper we will use the standard consensus formulation, but we emphasise that our formulation is not restricted to consensus dynamics.

\subsection{Simplicial Kuramoto model}\label{sec:simplicial_kuramoto}

The Kuramoto model~\cite{kuramoto1975self} is typically formulated for a node phase dynamical variable $\theta \in \mathbb R^{n_0}$, with natural frequencies $\omega = (\omega_1, \ldots, \omega_{n_0}) \in \mathbb R^{n_0}$ that are sitting on the nodes of a graph $G = (V, E)$ ($|V|=n_0$, $|E|=n_1$) and interact through the graph adjacency matrix $A_{ij}\in \mathbb R^{n_0\times n_0}$
\begin{align}
    \dot \theta_i = \omega_i - \sigma \sum_j A_{ij} \sin(\theta_i - \theta_j)\, .
    \label{eq:kuramoto}
\end{align}

For simplicity, we will consider a unit coupling $\sigma=1$ throughout the remainder of this paper and will thus omit it from here on. 
The unweighted node Kuramoto model can be equivalently formulated in vector form using the $n_0 \times n_{1}$ incidence matrix $B_0^T$~\cite{jadbabaie2004stability} and a vector of internal frequencies $\omega$ as
\begin{align}
    \dot \theta = \omega - B_0^T\sin(B_0\theta)\, , 
    \label{eq:incidence_kuramoto}
\end{align}
which is approximated by the Laplacian dynamics $\dot \theta = \omega - B_0^TB_0\theta = \omega - L_0\theta$ in the limit $B_0\theta\ll 1$. 
When $\omega_i=\omega$ for all $i$, it is customary to study the node Kuramoto model in a frame rotating at $\omega t$ and thus ignore the internal frequencies, yielding $\dot \theta = L_0\theta$. 
The Kuramoto model is therefore a nonlinear extension of the consensus dynamics introduced in Section~\ref{sec:sc_hl}. 

The weighted simplicial Kuramoto model is then given for a time-dependent $k$-cochain $\theta^{(k)}$, see~\cite{millan2020explosive,DeVille_2021} for the original equations, as
\begin{align}
    \dot \theta^{(k)} = -N_{k-1} \mathrm{sin}\left (N_{k-1}^*\theta^{(k)}\right) - N_k^* \mathrm{sin}\left(N_k\theta^{(k)}\right)\, ,
    \label{eq:simplicial_kuramoto}
\end{align}
or equally with the weight and incidence matrices
\begin{align}
    \dot \theta^{(k)} &= -B_{k-1} \mathrm{sin}\left (W_{k-1} B_{k-1}^T W^{-1}_k\theta^{(k)}\right)\nonumber  \\
    &\qquad - W_k B_k^T W_{k+1}^{-1} \mathrm{sin}\left (B_k\theta^{(k)}\right ) \,.\label{eq:weighted_simplicial_kuramoto}
\end{align}
We emphasize that the positions of the weight matrices are not arbitrary but constrained by the geometrical nature of the coboundary operators, see Appendix~\ref{discrete_geo}. The definition and interpretation of the weights themselves are defined by the system under study (which may include geometrical constraints). The weighted model can be seen as an extension of the Kuramoto model on weighted graphs, where the weights represent heterogeneous couplings. The weights on simplices of different order can be coupled, but this is not a necessary requirement and each order can capture independent characteristics of the system studied. We briefly explore the effect of weights on the dynamics of the Sakaguchi-Kuramoto model that forms the basis of our numerical experiments and is introduced in the next section in~\ref{sec:weighted}.
In the limit where $\theta$ is close to the subspace $\mathrm{ker}(L_k)$, we recover the linear consensus dynamics $\dot\theta^{(k)}=L_k\theta^{(k)}$. 
For $k=0$ and a connected graph, the kernel subspace consists of a constant vector, or full synchronization.

Similarly to the node Kuramoto, the internal frequencies of the oscillators can be introduced via a change of rotating frame $\theta^{(k)} \to \theta^{(k)} - h^{(k)} t$ for any vector $h^{(k)} \in \mathrm{ker}(L_k)$.
Indeed, such a vector will leave invariant the nonlinear terms, due to the presence of the boundary operator, and thus only adds a constant drift to the phases. 
Again, if we consider $k=0$ and a connected graph, the kernel of $L_0$ is the constant vector, corresponding to the stationary state of consensus dynamics, and thus, by extension, the node Kuramoto model in full synchronization.
For higher-order Kuramoto models $k>0$, the dimension of the  $\mathrm{ker}(L_k)$ corresponds to the number of $k$-dimensional holes, i.e. holes bounded by $k$-simplices, or --equivalently-- to the Betti number $\beta_k$ of the simplicial complex. 
For $k=0$, the Betti number $\beta_0$ corresponds to the number of connected components, and the stationary states are given by the piece-wise constant vectors to which each component will synchronize.
We did not introduce by hand any internal frequencies at this stage, as we will see in remainder of this section that they naturally emerge as a form of linear frustration.

\subsection{Simplicial Sakaguchi-Kuramoto model}\label{sec:frustrated_kuramoto}

The frustration in the Kuramoto model was first introduced in the Kuramoto–Sakaguchi model~\cite{sakaguchi1986soluble}, and has been studied in the context of graph theory, where the graph topology can give rise to rich repertoires of stationary states such as chimera states ~\cite{Abrams:2004hq,Shanahan:2010go} and remote synchronization~\cite{nicosia2013remote}.
The frustrated node Kuramoto model is usually written as
\begin{align}
    \dot \theta_i = \omega_i - \sum_j A_{ij} \sin(\theta_i - \theta_j + \alpha_{ij})\, ,
    \label{eq:frustrated_kuramoto}
\end{align}
where $\alpha \in \mathbb R^{n_1}$ is the edge frustration vector, often taken to be constant $\alpha_{ij} = \alpha_1$.
This equation cannot be directly formulated using the incidence matrices because the relative sign between the difference of phases $\theta_i-\theta_j$ and $\alpha_{ij}$ must be independent from the orientation of edges. 
In the adjacency matrix formulation, the orientation of edges is `hidden', because $L_0=B_0^TB_0$ and $A=D-L_0$ are independent of edge orientation, and the choice of ordering $\theta_i-\theta_j$, instead of $\theta_j-\theta_i$, is possible irrespective of the edge orientation. 
If one writes $B_0^T \sin(B_0\theta + \alpha_1)$, the resulting order in the difference of phases depends on the choice of edge orientation and will not be `node-centered', i.e. the $\theta_i$ term will not always appear in front.

Nevertheless, it is possible to introduce a frustration in the general formulation of the Kuramoto model~\eqref{eq:simplicial_kuramoto} with coboundary operators such that it reduces to the frustrated Kuramoto~\eqref{eq:frustrated_kuramoto} for $k=0$ and remains orientation invariant for $k+1$ simplices.
Our construction uses two ingredients: i) lift matrices~\cite{schaub2020random}, defined as
\begin{align}
    V_k = 
    \begin{pmatrix}
    I_{n_k}\\
    -I_{n_k}
    \end{pmatrix}\, , 
\end{align}
for any order $k$, and ii) the projection onto the positive or negative entries of any matrix $X$, defined element-wise as
\begin{align}
    X_{ij}^\pm = \frac12 \big(X_{ij} \pm \left |X_{ij}\right |\big )\,\quad \forall ij, 
    \label{pm-projection} 
\end{align}
where $|\cdot|$ denotes the absolute value function. The lift matrices create duplicates of simplices of order $k$ with an orientation opposite to the original one, whilst the projection sets half of the doubled simplices to zero, i.e. removes them, based on their signs.
One can define the lift of the coboundary operator as
\begin{align}
    N_k \to V_{k+1} N_k V_k^T\, . 
    \label{N_lift}
\end{align}
The projection to positive or negative entries is often used to define directed node graph Laplacians~\cite{grady2010discrete,chapman2015advection} by transforming the edge orientation to an edge direction with either
\begin{align}
    L_{0,\mathrm{out}} = N_0^* N_0^+\, , \quad \mathrm{or} \quad  L_{0, \mathrm{in}} = (N_0^-)^* N_0\, .
\end{align}
With $D_\mathrm{out/in}$ the diagonal matrices of out- or in- degrees and $A_\mathrm{dir}$ the corresponding directed adjacency matrix, $L_{0,out}$ models the directed diffusion dynamics written explicitly as $L_{0,\mathrm{out}} = D_\mathrm{out} - A_\mathrm{dir}$ and $L_{0,in}$ corresponds to the directed consensus dynamics with $L_{0,\mathrm{in}} = D_\mathrm{in} - A_\mathrm{dir}$. 

As we have seen in the construction of the weighted simplicial Kuramoto model~\eqref{eq:simplicial_kuramoto}, we are using the formulation that yields consensus dynamics. We will thus consider the associated projection onto the negative entries of the lifted simplicial Laplacian as
\begin{align}
    \widehat L_k = N_{k-1}^-N_{k-1}^* + (N_k^*V_{k+1}^T)^-V_{k+1}N_k\, .
    \label{laplacian-hat}
\end{align}
First, we note that $\widehat L_k=L_k$, see Appendix~\ref{lift_proj}, thus the application of the lift and the projection has a trivial effect on the Hodge Laplacian, but crucially it allows us to introduce the frustration via the linear frustration operator
\begin{align}
   \mathcal F_k^{\alpha_k}(N_k): x \mapsto N_k x + \alpha_k\, ,
\end{align}
acting on any cochain $x$ and arbitrary frustration cochain $\alpha_k$.
We can now formulate the frustrated simplicial Kuramoto model as
\begin{align}
    \dot \theta^{(k)} &= - \mathcal F_k^{\alpha_k}(N_{k-1}) \left[\mathrm{sin}\left (N_{k-1}^*\theta^{(k)}\right)\right] \nonumber \\
    &\qquad - (N_k^*V_{k+1}^T)^- \mathrm{sin}\left(\mathcal F_{k+1}^{\alpha_{k+1}}( V_{k+1}N_k)\left[\theta^{(k)}\right]\right)\\
    &=-\alpha_k - N_{k-1}\mathrm{sin}\left ( N_{k-1}^* \theta^{(k)}\right) \nonumber \\
    &\qquad - (N_k^*V_{k+1})^- \mathrm{sin}\left (V_{k+1} N_k \theta^{(k)} + \alpha_{k+1}\right)\, .
    \label{frustrated_simplicial_kuramoto}
\end{align}
By construction, our formulation is independent of the orientation of the $k+1$-simplices but not of the $k$-simplices because only the action of the $k+1$ lift is non trivial as it acts inside the nonlinear part of the equation, see Appendix~\ref{lift_proj}.

From this point of view, $\alpha_k$ is a linear frustration, whilst $\alpha_{k+1}$ is a nonlinear frustration.
Like in \ref{sec:simplicial_kuramoto} where the internal frequencies are all equal, $\alpha_k$ can be an arbitrary vector not necessarily in $\mathrm{ker}(L_k)$, which corresponds to equal internal frequencies in the node Kuramoto.
This can lead to a variety of dynamics, including partially synchronized dynamics or even non-stationary dynamics if its amplitude is large enough~\cite{millan2020explosive}.

For $k=0$, we recover the frustrated node Kuramoto model~\eqref{eq:frustrated_kuramoto} as
\begin{align}
    \dot \theta^{(0)} = -\alpha_0 - (N_0^* V_1^T)^- \mathrm{sin}\left (V_1 N_0 \theta^{(0)} + \alpha_1\right)\, , 
\end{align}
where the natural frequencies vector $\alpha_0$ naturally appears from the frustration operator, whilst the rest vanishes with $N_{-1}=0$.
The case where $k=1$ constitutes the main equation we consider in the rest of this paper, namely the frustrated edge simplicial Kuramoto model
\begin{align}
    \dot \theta^{(1)} &= -\alpha_1 - N_0\mathrm{sin}\left ( N_0^* \theta^{(1)}\right) \nonumber \\
    &-  (N_1^*V_2)^- \mathrm{sin}\left (V_2 N_1 \theta^{(1)} + \alpha_2\right)\, ,
    \label{eq:edge_frustrated_kuramoto}
\end{align}
which is invariant under change of face orientations, but not under change of edge orientation.

\subsection{Hodge decomposition of the dynamics}\label{sec:hodge_decomposition}

The Hodge decomposition is an important tool to study the properties of simplicial complexes. 
Here, we use it to decompose the dynamics of the oscillators on the simplicial Kuramoto model to understand their properties in relation to the amount of frustration applied.
The Hodge decomposition theorem states that the space of $k$-cochains can be decomposed into three orthogonal spaces~\cite{Eckmann:1945,jiang2011statistical}
\begin{align}
    C^{(k)}=\mathrm{Im}(N_{k-1})\oplus \ker(L_k)\oplus \mathrm{Im}(N_k^*)\, ,
\end{align}
which can be seen as analogues to the gradient, harmonic and curl space respectively. When $k=1$ the three orthogonal spaces are exactly the gradient, harmonic and curl space respectively. 
Any $k$-cochain $\theta^{(k)}$ can thus be projected onto each subspace $\theta^{(k)} = \theta_\mathrm{g}^{(k)} + \theta_\mathrm{h}^{(k)} + \theta_\mathrm{c}^{(k)}$ as follow
\begin{align}
\begin{split}
    \theta_\mathrm{g}^{(k)} &= N_k \theta^{(k-1)} \\
    L_k\theta_\mathrm{h}^{(k)} &= 0\\
    \theta_\mathrm{c}^{(k)} &= N_{k+1}^* \theta^{(k+1)}\, .
\end{split}
\label{hodge_projections}
\end{align}
where $\theta^{(k-1)}$ and $\theta^{(k+1)}$ are the corresponding potentials.
Here, instead of computing these potentials, as done for example in~\cite{millan2020explosive}, we project the $k$-cochain $\theta^{(k)}$ onto each subspace using the projection operators
\begin{align}
    \begin{split}
    P_\mathrm{grad} &= p_{\mathrm{grad}}^Tp_{\mathrm{grad}}\\
    P_\mathrm{curl} &= p_{\mathrm{curl}}^Tp_{\mathrm{curl}}\\
    P_\mathrm{harm} &= p_{\mathrm{harm}}^T p_{\mathrm{harm}}\, , 
    \end{split}
    \label{hodge_projections}
\end{align}
where the matrices $p_\mathrm{grad}$ and $p_\mathrm{curl}$ are the orthonormal bases of the ranges of $N_k$ and $N_{k+1}^*$ and $p_\mathrm{harm}$ the orthonormal basis of the kernel of $L_k$.

\subsection{Simplicial order parameter}\label{sec:SOP}

Probably the most popular and fundamental tool to measure the level of synchronization in a coupled dynamical system is the order parameter. It is usually defined as
\begin{align}
    R_{0, c}^2(\theta) :=\frac{1}{n_0} \left | \sum_{i=1}^{n_0}
    \exp\left( j\theta_i\right) \right |^2\, ,
    \label{eq:node_OP}
\end{align}
where $j=\sqrt{-1}$, and was introduced for the original Kuramoto model on a complete graph, i.e. where all oscillators are coupled. 
The generalization of the order parameter to any graph structure~\cite{jadbabaie2004stability} can be expressed as
\begin{align}
    R_{0,g}^2(\theta) := 1 + \frac{2}{n_0^2} 1_{n_1}\cdot( \cos(N_1^* \theta) -1 )\, , 
    \label{order_naive}
\end{align}
where $1_{n_1}$ is the unit vector of dimension $n_1$, see Appendix~\ref{simplicial_order} for the details.
This formulation allows one to write the node Kuramoto model with uniform natural frequencies as a gradient flow of the form
\begin{align}
    \dot \theta = \frac12 n_0^2 \nabla_\theta R_0^2(\theta)\, . 
\end{align}
Notice that the usual minus sign is not needed as the order parameter is a concave function.
Notice that only the cosine term is needed to express the gradient flow, while the other constant terms are needed for the normalization.

For a simpler derivation, we will thus modify the normalization to define the simplicial order parameter (SOP) as
\begin{align}
    R_k^2(\theta^{(k)}) = \frac{1}{C_k}\Big( 1_{n_{k-1}} \cdot W_{k-1}^{-1} \cos(N_{k-1}^*\theta^{(k)})\Big. \nonumber \\
    \Big.+1_{n_{k+1}} \cdot W_{k+1}^{-1} \cos(N_k\theta^{(k)})\Big)\, ,
    \label{order_general}
\end{align}
where the normalization is $C_k=1_{n_{k-1}}\cdot W_{k-1}^{-1}1_{n_{k-1}} + 1_{n_{k+1}}\cdot W_{k+1}^{-1}1_{n_{k+1}}$ which corresponds to the weighted sum of  nodes and faces of the simplex, or the combined number of nodes and faces for unweighted simplicial complexes, see Appendix~\ref{simplicial_order} for details.

As expected, $R_k = 1$ if $\theta^{(k)}$ is in the harmonic space, which corresponds to full synchronization. 
Notice that for $k>1$, the harmonic space is in general not spanned by the constant vector, and full synchronization does not correspond to equal $\theta_j$ values on the $k$-simplices. 
The simplicial order parameter generalizes the notion of full synchronization to the instantaneous phase vector to be in the harmonic space, where the phases are in general not equal, except in the node Kuramoto case. 
This type of harmonic synchronization is therefore akin to a \textit{simplicial} phase locking, in which each higher-order phase evolves with a different proper frequency but overall the whole dynamics lives within the harmonic space, i.e. $\mathrm{ker}(L_k)$ for the corresponding $k$.
In addition, if the dimension of the harmonic space is larger than one, the fully synchronized state is in fact a linear combination of the basis vectors of the harmonic space. 
For $\alpha_2=0$ and if $\alpha_1$ is harmonic, i.e. $\alpha_1 \in \mathrm{Ker}(L_1)$, the particular linear combinations will be dictated by the choice of $\alpha_1$, or, if absent, by the choice of initial conditions. Thus our formulation extends the notion of full synchronization beyond constant phases to include a generalized harmonic phase lock.

As in the node Kuramoto case, this order parameter acts as a potential for the gradient flow formulation of the full $k$-order Kuramoto dynamics as 
\begin{align}
    \dot \theta^{(k)} = C_k W_k\nabla_{\theta^{(k)}} R_k^2(\theta^{(k)})\, . 
\end{align}
Note that this formulation does not contain the harmonic natural frequencies which can be recovered, as before, via a change of rotating frame.
Finally, we notice that in the case of the standard node Kuramoto, this measure corresponds to the weighted generalization of~\eqref{order_naive} with a different normalization factor
\begin{align}
    \mathcal R_0^2(\theta^{(0)}) = \frac{1}{C_0} 1_{n_1} \cdot W_1^{-1} \cos(N_1^*\theta^{(0)})\, ,
    \label{node_order}
\end{align}
where $C_0 =\sum_{i=0}^{n_1} (W_1^{-1})_{ii}$ is the weighted sum of edges, or for an unweighted graph, the number of edges.

\section{Examples}

\subsection{Frustrated simplicial Kuramoto model on a face}\label{sec:example_1}

\begin{figure*}[htbp]
    \centering
    \includegraphics[width=0.9\textwidth]{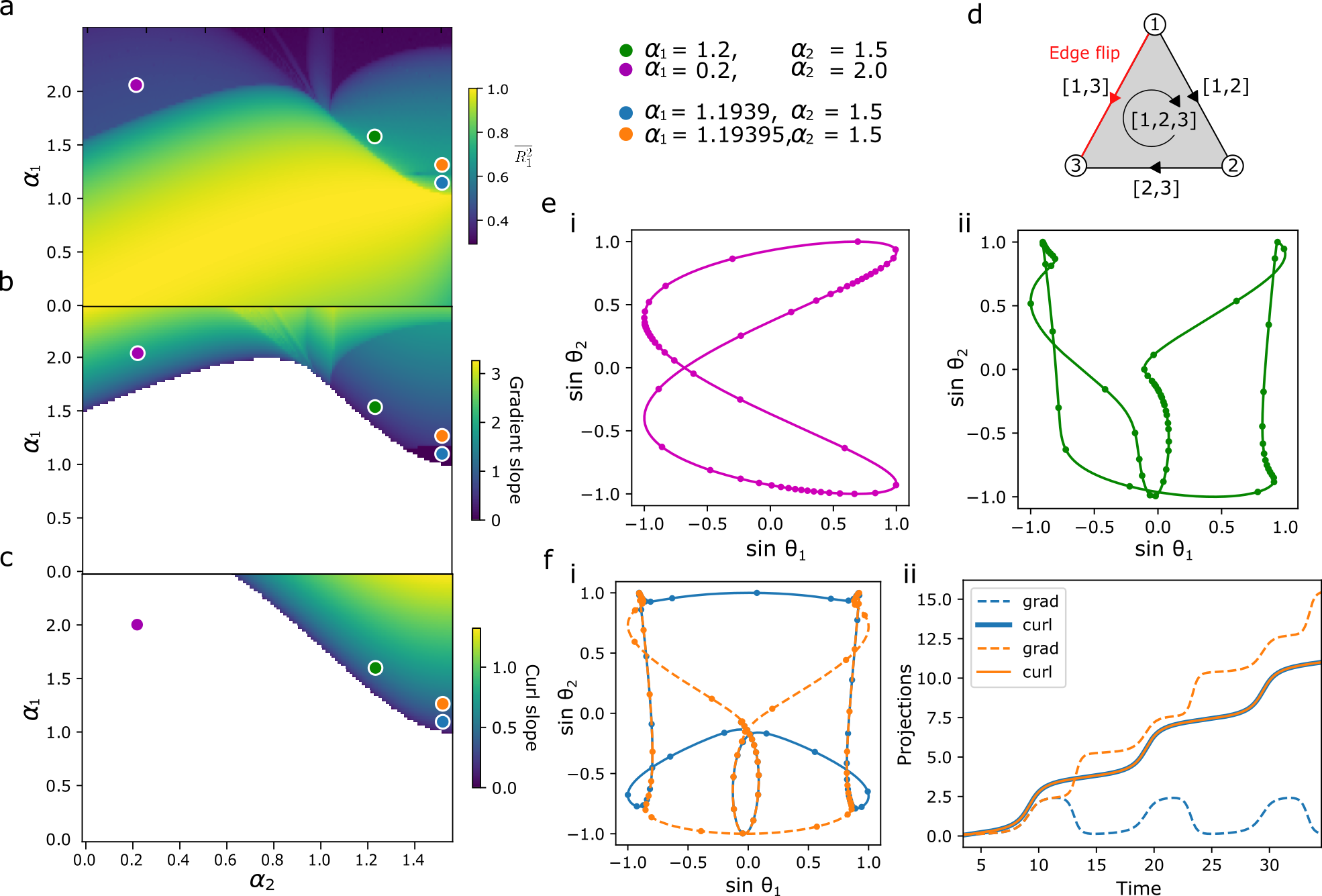}
    \caption{This figure illustrates the effect of frustration in a simplicial complex composed of a single face and with the orientation of an edge reversed (panel {\bf d}).
    We scan a range of values for both frustration parameters $\alpha_1$ and $\alpha_2$
    Panel {\bf a} shows the average value of the simplicial order parameter defined in~\eqref{order_general} in the stationary regime of the solution.
    Panels {\bf b} and {\bf c} respectively show the slope of the time evolution of the gradient and curl projection of the dynamics, see text for more details.
    The regions in white show where the projections are time independent, i.e. constant while dark blue - value of zero - are oscillating solutions around a fixed value. Higher values correspond to solutions that have a linear growth term in time.
    In panel {\bf e}, two typical stationary trajectories of the dynamics are shown in the regime with, panel {\bf e(i)}, and without, panel {\bf e(ii)}, a stationary curl. The frustration parameter values are indicated by the magenta and green dots respectively in panels {\bf a,b,c}.
    The circle markers on the trajectories are equally spaced in time along one cycle.
    Panel {\bf f} illustrates the sharp transition between vanishing and non vanishing slope of the projection of the gradient in {\bf f(ii)} and the resulting change of the trajectories in {\bf f(i)}. The frustration parameter values are shown by round markers of     corresponding colors in panels {\bf a,b,c}.
    Notice that both curl projections overlap across the transition in {\bf f(ii)}.
    }
    \label{fig:figure_1}
\end{figure*}

To showcase the properties of the frustrated simplicial Kuramoto model, we begin with one of the simplest examples: the single face complex of a triangle graph.
The single face triangle complex has no hole and thus the harmonic space of the Hodge Laplacian is of dimension zero. Therefore, to be in the full synchronization regime, defined as $\theta_i=\theta_j\ \forall i,j$ in the absence of harmonic space, one would expect that that state is only accessible for the non-frustrated model with $\alpha_1=\alpha_2=0$ in~\eqref{eq:edge_frustrated_kuramoto}. We will show it is not the case and the dynamics can still reach full-synchronization. 
For simplicity, and without loss of generality, we will use $\alpha_1$ as a constant vector in time with the same value on all three edges.
In addition, whilst the model is invariant to face orientation, it is not invariant to edge orientation. We thus have two non-equivalent choices for edge orientation: (i) a fully oriented complex, or (ii) one edge oriented in the opposite direction, as shown in Fig.~\ref{fig:figure_1}(d).

In (i) the fully oriented complex, all edges are equivalent and the frustrated simplicial Kuramoto model reduces to the scalar equation
\begin{align*}
    \dot \theta = -\alpha_1 - \sin(3\theta + \alpha_2)\, . 
\end{align*}
If $|\alpha_1| < 1$, any initial condition will converge, as time $\rightarrow\infty$ to full synchronisation with phase $\theta_\infty =\frac13 \left( \sin^{-1}(-\alpha_1) -\alpha_2\right)$ in the stationary state. 
Otherwise, the stationary solution will be periodic around a linearly increasing trend.
In (ii), the case of a flipped edge orientation, only two edges are equivalent, yielding the following coupled differential equations
\begin{align*}
    \dot \theta_1 &= -\alpha_1 - \sin(-\theta_1 + \theta_2) - \sin(2\theta_1 - \theta_2 + \alpha_2)\\
    \dot \theta_2 &= - \alpha_1 + 2\sin(-\theta_1 + \theta_2) - \sin(2\theta_1 - \theta_2 + \alpha_2)\, .
\end{align*}
We solve these equations numerically for values of $\alpha_1\in[0, 2.5]$ and $\alpha_2\in \left [0, \frac{\pi}{2}\right]$ and show in Fig.~\ref{fig:figure_1}(a-c) three measures that help us characterize the ensuing dynamics.

In Fig.~\ref{fig:figure_1}(a), we plot the simplicial order parameter ~\eqref{order_general} where we observe a full synchronization regime, $R^2_1(\theta^{(1)})=1$, for the non frustrated case with $\alpha_1=0,\ \alpha_2=0$, but also for a large region of the $\alpha_1$ and $\alpha_2$ parameter space. 
To understand this regime further in term of the Hodge decomposition of the stationary state, we show in Fig.~\ref{fig:figure_1}(b-c) the slope of a linear fit, representing the drift, of the temporal evolution of the projection of the solution onto the gradient and the curl subspaces at stationarity. 
More precisely, we estimate the parameter $a_h$, or slope, of the linear regression of the projections of $\theta(t)$ onto the grad, curl and harmonic spaces, as defined in \eqref{hodge_projections}. 
In addition, the white regions correspond to projections thata  are constant in time, while regions with vanishing slopes but non-constant projections are in dark blue - corresponding to a value of zero. The latter correspond to oscillating solutions around a constant value. As we will see below, these measures provide a finer, while still tractable, analysis of the solutions in the frustration parameter space than the order parameter.
We highlight some important observations from these plots. 
First, the region where the gradient component of the solution is non-constant matches with the region where we observe a large drop in synchronization. 
This suggests that when the gradient, and curl, component of the phases becomes too large, the synchronization is abruptly reduced. 
Notice that this region is bounded below by $\alpha_1=1$, as for the fully oriented case.
Second, the region where the projection of the curl is not constant is strictly contained within the region of non-constant gradient. This is a general result that we show below.

In Fig.~\eqref{fig:figure_1}(e) we show two typical trajectories of (i) non-constant gradient and (ii) non-constant gradient and non-constant curl (corresponding to the magenta and green markers on Fig.~\eqref{fig:figure_1}(a-c) respectively). 
We observe that the trajectory is a Lissajous curve when the curl component is constant and a more complex trajectory otherwise.
The Lissajous behavior is simply explained by imposing a constant curl, $\theta_2 = 2\theta_1 + \delta$ for a constant $\delta$, which reduces the coupled differential equations to a one dimensional dynamical system 
\begin{align*}
   \dot \theta_1 = -\alpha_1  - \sin(\alpha_2) - \sin(\theta_1)\, , 
\end{align*}
parameterizing the speed of motion on this curve, represented in Fig.~\eqref{fig:figure_1}(e)(i) as dots equally spaced in time.

Finally, in the regime with non-vanishing curl, upper right of Fig.~\eqref{fig:figure_1}(a-c), there exists a sharp transition along $\alpha_1$ between almost vanishing and positive gradient slope while the curl projection grows continuously.
In Fig.~\ref{fig:figure_1}(f), we show two trajectories on each side of this transition, corresponding to the blue (zero gradient slope) and orange (non-zero gradient slope) dots in Fig.~\eqref{fig:figure_1}(a-c). The two trajectories are partially overlapping where the segments of the trajectories parallel to the $\mathrm{sin}(\theta_1)$ axis are switching sign of $\mathrm{sin}(\theta_2)$. 
A more precise understanding of this transition in the context of dynamical system theory could be of interest but is beyond the scope of this work.

Although simple, this simplicial complex already displays interesting and non-trivial dynamical behavior of the simplicial Sakaguchi-Kuramoto model when the frustrations are turned on. 
However, this example does not contain a hole, i.e. there is no harmonic component to the dynamics. We explore the role of the harmonic component of the dynamic in the next section.

\subsection{Synchronization and edge orientation}\label{sec:example_2}

 \begin{figure*}[htbp]
 \centering
    \includegraphics[width=\textwidth]{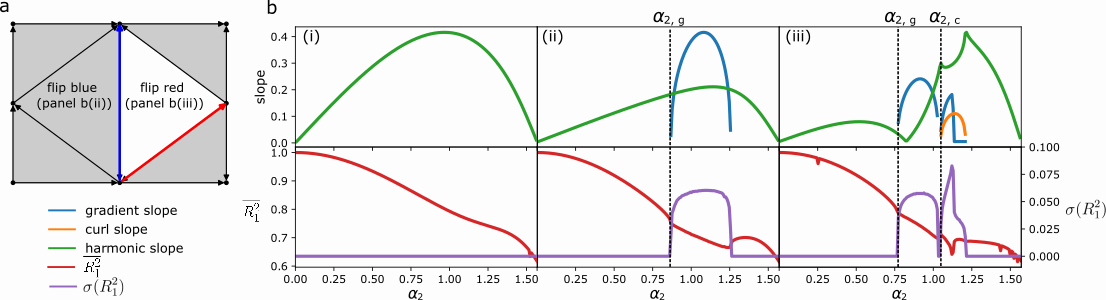}
    \caption{We consider the simplicial complex in panel {\bf a} comprising a single hole in white, faces in gray and edge orientation with black arrows. To study the effect of edge orientation on the dynamics, we construct two modified simplicial complexes with the (ii) blue edge reversed and (iii) both the blue and red edges reversed. 
    In {\bf b} we set the linear frustration parameter $\alpha_1=0$ and scan the nonlinear frustration parameter $\alpha_2$ for the three complexes and plot the  slope of the projection of the harmonic, gradient and curl component of the solution in the top row.
    If the slope value is absent, it corresponds to a constant projection. For example in panel {\bf bi}, only the harmonic projection is non-constant in time. If the slope is present with a value of $0$, the solution is oscillating around a fixed point, for example in panel {\bf biii} for gradient slope at large $\alpha_2$.
    In the bottom row, we show the average and standard deviation across time of the simplicial order parameter: $\overline{R_1^2}$ and $\sigma(R_1^2)$ respectively. The order parameter is $1$ for $\alpha_2=0$ and decreases as the nonlinear frustration increases. The standard deviation of the order parameter allows us to detect in which regime the solution is non-constant in the gradient or curl space. With this simplicial complex, the solution is non-constant only when the grad and/or curl are non-constant, even for high $\alpha_2$, which we will see is in contrast to the next example in Section~\ref{sec:example_3}.
    }
    \label{fig:figure_2}
\end{figure*}

For our second example, we use a slightly larger simplicial complex which we display in Fig.~\ref{fig:figure_2}(a) to study the properties of the dynamics in the presence of a hole.
We previously mentioned in Sections~\ref{sec:frustrated_kuramoto} and~\ref{sec:SOP}, that if $\alpha_1\in\mathrm{Ker}(L_1)$ and $\alpha_2=0$, the dynamics will fully synchronize with the stationary state $\theta^{(1)}=\alpha_1$. Setting $\alpha_2>0$ will perturb the stationary state by increasing the gradient and curl components, but may remain in a simplicial phase-lock for a wide range of parameters, including at high frustration. 

In Fig.~\ref{fig:figure_2}(b)(i), without loss of generality, we fix $\alpha_1=0$ and scan $\alpha_2\in \left [0, \frac{\pi}{2}\right ]$ and observe that for a given choice of edge orientation, the level of synchronization, as measured by the simplicial order parameter, decreases with $\alpha_2$, while its standard deviation remain null, which is an indicator of simplicial phase-locking.
The projections onto the gradient and the curl spaces are constant, while the harmonic projection is not.
We also notice that the dynamics are very sensitive to changes in orientation. Reversing the orientation of the blue edge, Fig.~\ref{fig:figure_2}(a), has a dramatic impact on the solution, with the gradient component becoming non-constant for some choices of $\alpha_2$, see Fig.~\ref{fig:figure_2}(b)(ii). Reversing the orientation of both the blue and red edges make both the gradient and curl components non-constant (see Fig.~\ref{fig:figure_2}(b)(iii)).
Similar to the first example of the single face triangle complex, we observe that the projection onto the curl is non-constant only if the projection onto the gradient is also non-constant.

We now show the existence of two critical values for $\alpha_2$ corresponding to changes of regime: $\alpha_{2, g}$ when the gradient becomes non-constant and $\alpha_{2, c}$ when the curl becomes non-constant, and that $\alpha_{2, c}\geq \alpha_{2, g}$.
For simplicity, we set $\alpha_1=0$, but any $\alpha_1\in\mathrm{Ker}(L_1)$ can be considered. 
First, for small $\alpha_2 < \alpha_{2, g}$, i.e. when the gradient and curl component are constant in the simplicial phase-lock regime, the solution is of the form
\begin{align*}
    \theta_{\infty}(t) = \Omega t h + \epsilon\, , 
\end{align*}
for a scalar $\Omega$, $h\in \mathrm{Ker}(L_1)$ and $\epsilon \in\mathrm{Ker}(L_1)^\perp $ small, and thus
\begin{align}
    \Omega h  = - L_1 \epsilon - (N^*_1V_2)^- \mathbf 1\alpha_2\, . 
    \label{omega-h}
\end{align}
The term $(N^*_k V_{k+1})^- 1_{n_{k+1}}$ counts the number of $k+1$-simplices adjacent to each $k$-simplex and is therefore a generalized degree. $(N^*_0 V_1)^- 1_{n_1}$ is simply the weighted node degree and $(N^*_1 V_2)^- 1_{n_2}$ the weighted edge degree.

From equation~\eqref{omega-h}, $\Omega$, $h$ and $\epsilon$ are defined as 
\begin{align}
    \Omega h &= - P_\mathrm{harm} (N^*_1V_2)^- \mathbf 1\alpha_2\label{Omega}\\
    L_1 \epsilon &= - P_\mathrm{harm}^\perp (N^*_1V_2)^- \mathbf 1\alpha_2\label{epsilon}\, . 
\end{align}
In this linear approximation, there always exists a solution of equation~\eqref{epsilon} for $\epsilon$, but, in the nonlinear regime, the presence of the sine function may prevent any solution to exist and the system will leave the phase-locked regime for $\alpha_2 > \alpha_{2, g}$.
The exact value of $\alpha_{2, g}$ is difficult to find analytically, as we see in Fig.~\eqref{fig:figure_2}, it depends not only on the structure of the simplicial complex but on the edge orientation as well.
In addition, if the dimension of the kernel of $L_1$ is larger than $1$, the direction of the vector $h$ in the harmonic space may also depend on $\alpha_2$.
For small $\alpha_2$, the value of $\Omega$ is represented in Fig.~\eqref{fig:figure_2}(b) by the value of the harmonic slope (in green), and increases quasi-linearly as a function of $\alpha_2$ as expected from equation~\eqref{Omega}.
For larger values of $\alpha_2$, the previous linearization is not valid and cannot be used to correctly approximate the dynamics.
However, the Hodge decomposition is still valid and the corresponding projections operators defined in equations~\eqref{hodge_projections} allow us to decompose the simplicial Kuramoto equation in its gradient and rotational parts as
\begin{align}
    \dot \theta_g := P_\mathrm{grad}\dot \theta^{(1)} &= - N_0 \sin(N_0^*\theta_g)\nonumber\\
     \qquad &- P_\mathrm{grad} (N_1^*V_2)^- \sin(V_2N_1\theta_c + \alpha_2)\label{eq:coupling}\\
    \dot \theta_c := P_\mathrm{curl} \dot \theta^{(1)} &= - P_\mathrm{curl} (N_1^*V_2)^- \sin(V_2N_1\theta_c + \alpha_2)\, , 
\end{align}
where $\theta^{(1)} = \theta_g + \theta_c + \theta_h$ from the Hodge decomposition. 
Notice that the Hodge decomposition directly implies that $N_1\theta^{(1)} = N_1\theta_c$ as $\theta_g$ is in the range of $N_0$, and $N_1N_0=0$, thus only the curl component of $\theta$, $\theta_c$, can survive in the sine terms containing $N_1$, a similar argument applies for $\theta_g$.
These two equations are coupled by the term $P_\mathrm{grad}(N_1^*V_2)^-\sin(V_2 N_1 \theta)$ which vanishes if $\alpha_2=0$ because $ P_\mathrm{grad}(N_1^*V_2)^- \sin(V_2 N_1 \theta) = P_\mathrm{grad}N_1^* \sin (N_1\theta)=0$, since the range of $N_1$ is orthogonal to the gradient space.
The fact that, in the absence of nonlinear frustration, the curl, grad and harmonic projection of the dynamics are decoupled was already noted in~\cite{millan2020explosive} and is a direct result of the orthogonality of these three spaces.
The nonlinear frustration makes the dynamics of the gradient projection depend on the solution of the curl projection.
This coupling relies on the presence of the lift and projections which are necessary to preserve the independence on the face orientation of the dynamics.

The presence of the coupling in the gradient equation explains why the dynamics of the curl projection can be non-constant only if the dynamics of the grad projection is also non-constant.
Indeed, for $\alpha_{2, g} < \alpha_2 < \alpha_{2, c}$, the curl projection equation is stationary with a time-independent $\theta_{c, \infty}$ solution, i.e. $\dot \theta_{c, \infty} = 0$.
The coupling term in the grad projection is then a constant, which we denote as $\delta$, and the Kuramoto dynamics reduces to 
\begin{align*}
    \dot \theta_g = - \delta -N_0 \sin(N_0^*\theta_g) \, . 
    \end{align*}
These dynamics correspond to the edge Kuramoto model on the complex without faces and a non-harmonic natural frequency $\delta$. It has a transition from synchronization to non-synchronization regime at $\alpha_{2, g}$.
For $\alpha_2 > \alpha_{2, c}$, the dynamics are non-stationary in the curl projection and in the gradient due to the presence of $\delta$.

While the order of the transitions to non-stationarity for the different components hold whenever they exist, their existence and exact behavior is dependent on the orientation of the edges and the localization of holes. 
Remarkably, even the simple example we used here displays an abundance of varying behaviors, of which we have only described representative examples: we observe no transitions in Fig.~\ref{fig:figure_2}(b)(i), only a gradient transition as in Fig.~\eqref{fig:figure_2}(b)(ii) or two transitions as in Fig.~\eqref{fig:figure_2}(b)(ii) with a near singular re-phase-locked synchronization gap.
In Fig.~\ref{fig:figure_2}(b)(ii-iii), we also observe a re-synchronization to a phase-locked regime for $\alpha_2>\alpha_{2, c}$ until $\frac{\pi}{2}$, possibly a result of the small size of the complex and high degree of symmetry. 
Indeed, as we observe in the next section, this regime does not exist for larger, more irregular complexes, see Fig.~\ref{fig:figure_3}, and is replaced by a more chaotic regime.

\subsection{Larger simplicial complex}\label{sec:example_3}

\begin{figure}[htpb]
    \centering
    \includegraphics[width=0.9\columnwidth]{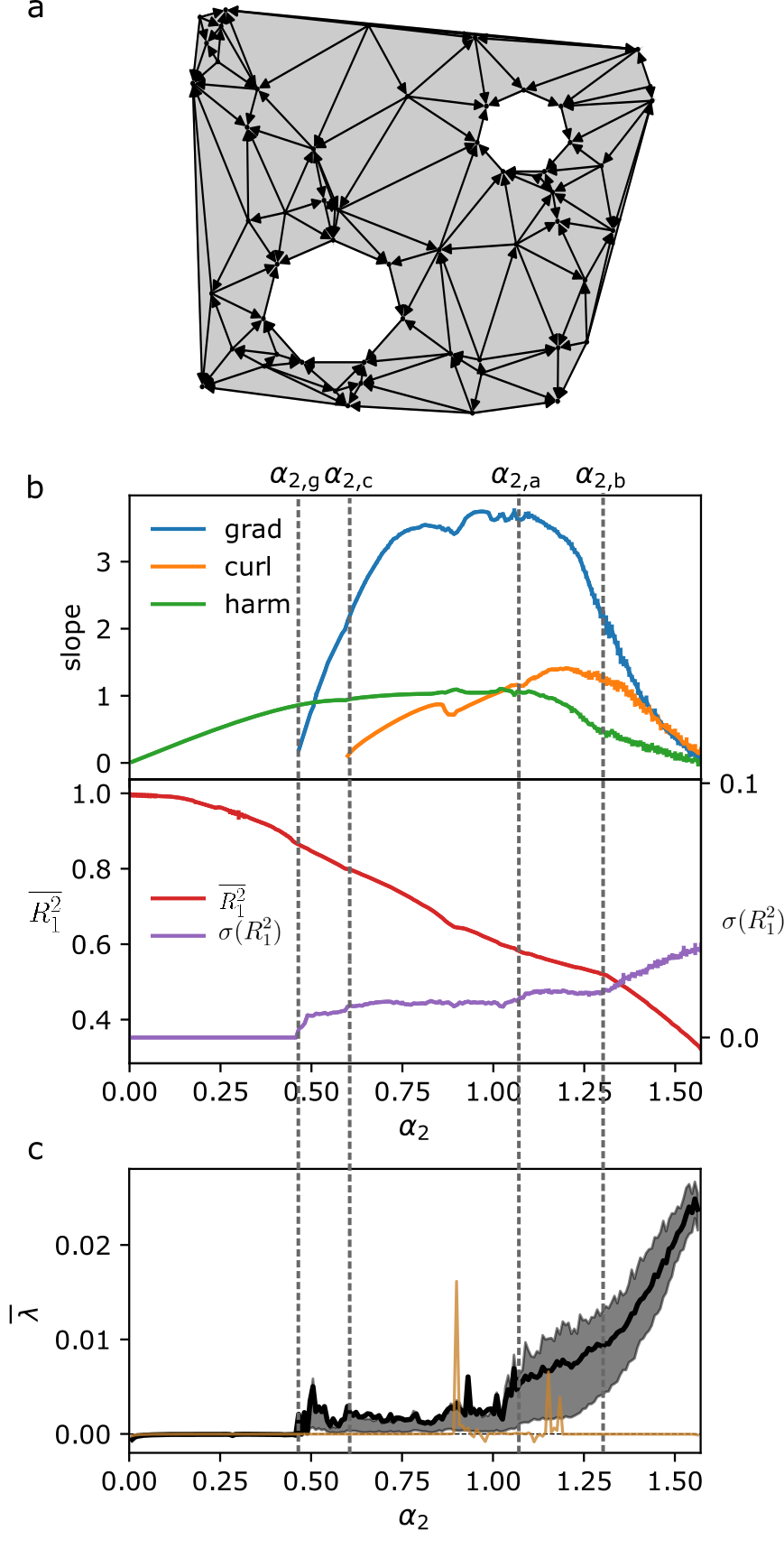}
    \caption{We consider the simplicial complex of panel {\bf a} obtained from a Delaunay triangulation of random points on a plane around two circular holes. In panel {\bf b}, we set the linear frustration parameter $\alpha_1=0$ and scan across the nonlinear frustration parameter $\alpha_2$. We plot the slope of the projection of the harmonic, gradient and curl component of the solution in the top row. If the slope value is absent, it corresponds to a constant projection. On the middle row, we show the average, $\overline{R_1^2}$, and standard deviation, $\sigma(R_1^2)$, across time of the simplicial order parameter $R_1^2$. In addition to the two critical $\alpha_2$, we manually highlighted two possible points corresponding to the onset of chaotic dynamics regimes with $\alpha_{2, a}$ and $\alpha_{2, b}$. 
    The standard deviation of the curves in the top of ${\bf b}$, which are calculated over $10$ simulations with random initial conditions, is zero, except for large $\alpha_2$ where it is small and is represented by the thickness of the curves.
    In {\bf c}, the dark line represents the average largest Lyapunov exponent $\overline{\lambda}$ over edges, and the shaded gray region between the lower an upper quartile of the corresponding distribution.
    For the sake of comparison, we have included a light brown curve  that is the mean largest Lyapunov exponent for the simplicial complex in Fig.~\ref{fig:figure_2}(b)(iii).
    }
    \label{fig:figure_3}
\end{figure}

Until now, we have studied the frustrated simplicial Kuramoto dynamics on small simplicial complexes in order to study and understand in detail the effects of the frustration on the dynamics.
As a final example for this paper, we consider a larger simplicial complex constructed from a Delaunay triangulation of random points on a plane around two circular holes as illustrated in Fig~\ref{fig:figure_3}(a).
In Fig.~\ref{fig:figure_3}(b), we show the same analysis as in Fig.~\ref{fig:figure_2} with the slope of the projections and the simplicial order parameter.
As expected, we observe more complex dynamics from the shape of these curves, obtained after averaging over $10$ simulations with random initial conditions.
In particular, we do not observe any re-synchronization for large $\alpha_2$ but rather an even more complex set of dynamics as shown by the standard deviation of the simplicial order parameter in Fig.~\eqref{fig:figure_3}(b).

To better quantify these complex dynamics, we compute the largest Lyapunov exponent~\cite{rosenstein1993practical, scholzel_christopher_2019_3814723} of the trajectories of each edge phase and show in Fig.~\ref{fig:figure_3}(c) the mean and quartile of them for each value of $\alpha_2$.
As soon as the dynamics are no longer constant, i.e. $\alpha_2>\alpha_{2, g}$, the largest Lyapunov exponent is on average positive, but increases significantly for larger $\alpha_2$, clearly indicative of chaotic dynamics. 
We visually noticed two different regime of chaotic dynamics, which we highlighted with $\alpha_{2, a}$ and $\alpha_{2, b}$ which corresponds to the start of the decrease of the slope of the gradient and the curl projection, respectively. 
For $\alpha_2>\alpha_{2, b}$, we also observe some sensitivity to initial condition on the value of the slope of the projection, the line thickness is the standard deviation across $10$ simulations with random initial conditions.

These two regimes would be interesting to study in more detail, since a decrease of the slope of the projection can either suggests more synchronization, as in the examples of Fig.~\ref{fig:figure_2}, or more random or chaotic dynamics, as in Fig.~\ref{fig:figure_3}. 
Understanding the transition between these two regimes in term of the complexity of the simplicial complex, where complexity is for example measured by the number of holes, their relative localization and the symmetries of the simplicial complex, is an open problem.
The Lyapunov exponent seems however a promising measure to identify the switching between the two regimes: for the example of Fig.~\ref{fig:figure_2}(biii) it remains at low values (see brown line in Fig.~\ref{fig:figure_3}c) and does not increase for large $\alpha_2$.

\subsection{Effect of weights}
\label{sec:weighted}
In this last example, we briefly explore the effect of weights on the Sakaguchi-Kuramoto dynamics using the general formulation of \eqref{eq:edge_frustrated_kuramoto}.
Our formulation allows mathematically for arbitrary weights on simplices of any order, however, the choice of weights may be constrained by the nature of the modelled system, e.g. geometric constraints of lengths, areas, and volumes. Whilst the exact interpretation of weights depends mainly on the context, we nevertheless provide general guidelines regarding their effect.
In the node Kuramoto model, weights on nodes can be interpreted as a modification of the underlying graph Laplacian and thus the dynamics it represents. For example, using the inverse degrees yields the normalised Laplacian. Edge weights are a natural mechanism to introduce heterogeneous interactions between oscillators and is known to give rise to interesting dynamics, e.g. metastable Chimera states~\cite{Shanahan:2010go}.
In the - frustrated - edge Kuramoto model~\eqref{frustrated_simplicial_kuramoto}, the focus of this paper, edge weights have a similar interpretation to edge weights in the node Kuramoto model: they quantify the strengths of the interactions for the node Kuramoto. Node and face weights both modulate the interaction strength between oscillators.
Face weights parameterize the strength of the triple coupling between phases where the nonlinear frustration acts, so vanishing faces weights are another mechanism to control the effect of the nonlinear frustration. Node weights play a similar role, but are decoupled from the effect of the frustration. We finally point out that while the weights at the different simplicial orders can related to each others, it is not necessarily the case, except in limit cases such as an edge with zero weights cannot serve as a support for a face.
A systematic exploration, both analytical and computational, of the role and effect of each type weights and their combinations is well beyond the scope of this paper. We present here a phenomenological description of the dynamics in a simple example as a preliminary to future work: we considered a simplicial complex comprised of two triangles, one full and one empty, sharing one face and varied the the weight $w$ of the full face, see Fig~\ref{fig:weighted_sc}a. 
In the limit where $w=0$, the face vanishes and we have two holes, and no frustration from $\alpha_2$ and the solution lies entirely in $\mathrm{Ker}(L_1)$. For low weights, only a small region is synchronized with a large increase in the projection in the gradient and harmonic spaces, but no curl component. The projection on the curl space is non-vanishing only for larger weights on the face. Finally for $w=1$, we recover a similar behavior to that of the simple triangle in section \ref{sec:example_1} and Fig.~\ref{fig:figure_1}.
The structure of the projection suggest non trivial relations between the different components, as well as three regimes: non-vanishing and vanishing curl, and two gradients regime nested in the vanishing curl one.
\begin{figure*}[htpb]
    \centering
    \includegraphics[width=0.8\textwidth]{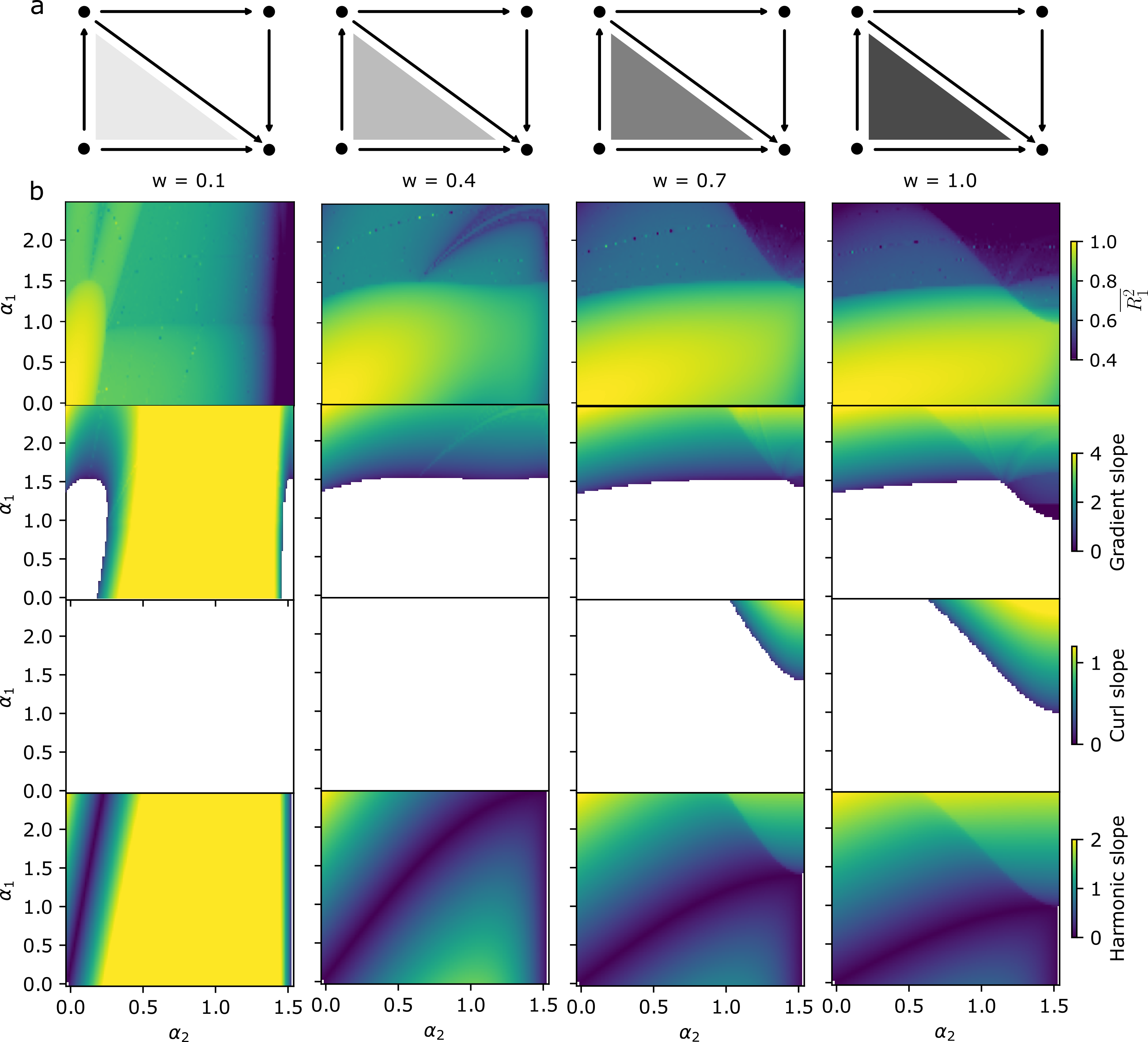}
    \caption{{\bf a} We consider a simplicial complex with one hole and one face parametrized by a weight $w\in[0,1]$.
    {\bf b} We scan the frustration parameters for four different values of $w>1$, as the case $w=1$ is trivial.  We display the order parameter as well as gradient, curl and harmonic slopes, as in Figure~\ref{fig:figure_1}. The structure of the projection suggest non trivial relations between the different components, as well as three regimes: non-vanishing and vanishing curl, and two gradients regime nested in the vanishing curl one.
    }
    \label{fig:weighted_sc}
\end{figure*}

\section{Conclusion}

In this work, we extend a previously introduced Kuramoto model on simplicial complexes~\cite{millan2020explosive,deville2021consensus} to include weights on any simplices as well as a linear and a non-linear frustration term to define the simplicial Sakaguchi-Kuramoto. 
This formulation naturally allows us to generalize the notion of synchronization, internal frequencies, and edge frustration, of the standard Kuramoto model.

Without frustration, the Kuramoto dynamics can be decomposed into three independent sub-systems aligned with the orthogonal spaces given by the Hodge decomposition.
However, we have demonstrated that by adding frustration to the dynamics, the harmonic, gradient and curl subspaces become hierarchically coupled, see equation~\eqref{eq:coupling}.
The dynamics in the harmonic space is coupled to both the gradient and curl subspaces even in the absence of harmonic linear frustration. In the linear regime of small nonlinear frustration, the amplitude of the dynamics in the harmonic space is proportional to the amount of frustration.
In the nonlinear regime of simplicial Sakaguchi-Kuramoto, we showed that the dynamics is highly varied, from constant to chaotic solutions.
Most surprisingly, the edge orientation is of fundamental importance in the resulting Kuramoto dynamics and the change of orientation of one edge can be enough to dramatically alter the dynamics.
Understanding the precise relationship between the choice of orientation for a given simplicial complex and the resulting type of dynamics has remained elusive so far but would be an interesting topic to gain further understanding of these systems particularly in the context of control.

We foresee various interesting directions for further interrogation of our frustrated simplicial Kuramoto and also additional adaptions. Firstly,while we only provide a simple example of the effect of weights on the dynamics, we believe that a full exploration of the weights definition and their effect will open interesting avenue of research not only theoretically but also for applications of the simplicial Sakaguchi-Kuramoto model. Secondly, whilst we used consensus dynamics in our formulation, we also mentioned earlier the dual formulation of the diffusion Kuramoto (see Appendix~\ref{diffusion_kuramoto}).  Indeed, examining how the dynamics of the consensus and diffusion formulations deviate in the weighted setting could be of interest. Thirdly, we did not explore the possible interplay between the linear and nonlinear frustration. 
The linear frustration is known to have a transition between stationary and non-stationary regimes~\cite{millan2020explosive}, but may also affect the types of dynamics with nonlinear frustration, which can even be made non-constant.
Finally, as we have shown with our three examples, the topology of the simplicial complex is crucial to determine the type of dynamics and in particular its complexity. 
A more complete characterisation in term of graph theoretical or topological measures would be of interest to identify the criteria necessary for the transition between non-stationary and chaotic dynamics.
In fact, particular geometries of simplicial complexes may support more specific types of dynamics, with maybe partial, cluster or metastable synchronisations.

\section*{Acknowledgments}

GP acknowledges partial support from Intesa Sanpaolo Innovation Center. The founder had no role in study design, data collection, and analysis, decision to publish, or preparation of the manuscript.
RP acknowledges funding through EPSRC award EP/N014529/1 supporting the EPSRC Centre for Mathematics of Precision Healthcare at Imperial and the Deutsche Forschungsgemeinschaft (DFG, German Research Foundation) Project-ID 424778381-TRR 295. PE acknowledges support from the NIHR Imperial Biomedical Research Centre (BRC) (grant number NIHR-BRC-P68711)
AA was supported by funding to the Blue Brain Project, a research center of the École polytechnique fédérale de Lausanne (EPFL), from the Swiss government’s ETH Board of the Swiss Federal Institutes of Technology.

\appendix

\section{Review of discrete geometry} \label{discrete_geo}

We provide here a short review of discrete geometry, following the exposition in~\cite{grady2010discrete} to which we redirect the reader for an in depth and detailed exposition of all the notions introduced here.
A simplicial complex is a collection of $k$-cells (node, edges, face, etc...), on which are defined $k$-chains as vectors with coefficients on each cell. 
The space of $k$-chains is denoted as $C_k$, and the incidence matrix $B_k^T$ maps $k+1$-chains to $k$-chains, i.e.
\begin{align*}
    B_k^T:C_{k+1}\to C_k\, .
\end{align*}

Dual to the space $C_k$ of $k$-chains is the space $C^k$ of $k$-cochains, defined using the scalar product as duality pairing, i.e. for $\tau_k\in C_k$ and $c^k \in C^k$, the pairing is 
\begin{align*}
    \langle \tau_k, c^k\rangle = \sum_i^{n_k} (\tau_k)_i (c^k)_i\, .
\end{align*}
From this pairing, the dual of the incidence matrix is defined as 
\begin{align*}
    \langle N_k^T\tau_{k+1}, c^k\rangle= \langle \tau_{k+1}, (N_k^T)^*c^k\rangle\, ,
\end{align*}
but reduces to coboundary operator 
\begin{align*}
    (N_k^T)^*=N_k:C^k \to C^{k+1}\, .
\end{align*}

We have thus defined the incidence matrix and its dual, but acting on chains and cochains. 
The map between them is a metric represented by a diagonal matrix, or weight matrix 
\begin{align*}
    W_k:C_k\to C^k
\end{align*}
as $c^k = W_k \tau_k$, and it's inverse 
\begin{align*}
    W_k^{-1}:C^k\to C_k\, .
\end{align*}
Then, to obtain the dual of the incidence matrix to form diffusion or Kuramoto equations, we need  to define the dual of simplicial complex and the Hodge operator.
If $n$ is the largest dimension of the $k$-cells in the complex, the dual of a complex is a complex where the $k$-cells of the primal complex are the $(n-k)$-cells of the dual complex.
With this definition, one can see that the dual incidence matrices $M_k$ are defined as $N_k^T = M_{n-k+1}$, that is the incidence matrix on $k$-cells is the transpose of the incidence matrix on $n-k+1$-cells in the dual complex.

The Hodge operator maps a $k$-cochain $\mathbf x$ of the primal complex to the $(n-k)$-cochain $\mathbf x^*$ on the dual as
\begin{align*}
    \mathbf x^* = \star \mathbf x := W_k^{-1}\mathbf x\, ,
\end{align*}
and the $(n-k)$-cochains $\mathbf y^*$ on the dual complex to the $k$-chain $\mathbf y$ on the primal complex as 
\begin{align*}
    \mathbf y = \star \mathbf y = W_k\mathbf y^*\, .
\end{align*}
We are now in the position to define $N_k^*$, the dual of the coboundary operator $N_k$, which maps $k+1$-cochains into $k$-cochains as 
\begin{align*}
    N_k^* = \star M_{n-k+1}\star = \star N_k^T\star= W_k N_k^TW_{k+1}^{-1}\, .
\end{align*}
Finally, the Hodge Laplacian is 
\begin{align*}
    L_k = N_{k-1}N_{k-1}^* + N_k^*N_k\, ,
\end{align*}
as defined in the main text.

\section{Lift, projection and frustrations}\label{lift_proj}

We provide here more details on the derivation of the frustrated simplicial Kuramoto model and the resulting orientation invariance. 
The most general projected lifted $L_k$ Laplacian with the lift from~\eqref{N_lift} and the projections onto negative values is 
\begin{align*}
    \widehat L_k^\mathrm{full} &= \frac12 (V_kN_{k-1}V_{k-1}^T)^-V_{k-1}N_{k-1}^*V_k^T \nonumber \\
    &+ \frac12 (V_kN_k^*V_{k+1}^T)^-V_{k+1}N_kV_k^T\,
\end{align*}
But we have the following propositions.
\begin{proposition}
The following holds for any $k$
\begin{align*}
    \frac12 V_k^T \widehat L_k^\mathrm{full} V_k = L_k \, .
\end{align*}
\end{proposition}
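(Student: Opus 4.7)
The plan is to reduce $\widehat{L}_k^{\mathrm{full}}$ to the clean block form $\tfrac{1}{2} V_k L_k V_k^T$ and then conjugate by $V_k$. Two elementary facts will do most of the work: the orthogonality relation $V_k^T V_k = 2 I_{n_k}$ (immediate from the definition of $V_k$), and the decomposition $X = X^+ + X^-$ together with the identity $(-X)^- = -X^+$, which holds entrywise because the negative part of $-X$ collects exactly the negatives of the positive entries of $X$.

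Next, I would expand everything in $2\times 2$ block form. For any $n_k \times n_{k-1}$ matrix $X$, the lift is
\begin{equation*}
V_k X V_{k-1}^T = \begin{pmatrix} X & -X \\ -X & X \end{pmatrix},
\end{equation*}
and applying the entrywise negative-part projection block by block gives
\begin{equation*}
(V_k X V_{k-1}^T)^- = \begin{pmatrix} X^- & -X^+ \\ -X^+ & X^- \end{pmatrix},
\end{equation*}
using the identity $(-X)^- = -X^+$ on the off-diagonal blocks. The companion factor is simply $V_{k-1} X^* V_k^T = \begin{pmatrix} X^* & -X^* \\ -X^* & X^* \end{pmatrix}$.

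The key algebraic step is then the product: multiplying these two block matrices, each of the four resulting blocks collapses, because $X^- X^* + X^+ X^* = (X^+ + X^-)X^* = X X^*$. Hence
\begin{equation*}
(V_k X V_{k-1}^T)^- V_{k-1} X^* V_k^T = \begin{pmatrix} X X^* & -X X^* \\ -X X^* & X X^* \end{pmatrix} = V_k (X X^*) V_k^T.
\end{equation*}
Specializing to $X = N_{k-1}$ gives the down-Laplacian block $V_k (N_{k-1} N_{k-1}^*) V_k^T$, and exactly the same calculation with $X = N_k^*$ paired against $V_{k+1} N_k V_k^T$ yields the up-Laplacian block $V_k (N_k^* N_k) V_k^T$. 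Summing,
\begin{equation*}
\widehat{L}_k^{\mathrm{full}} = \tfrac{1}{2} V_k \bigl(N_{k-1} N_{k-1}^* + N_k^* N_k\bigr) V_k^T = \tfrac{1}{2} V_k L_k V_k^T.
\end{equation*}

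Finally, sandwiching with $\tfrac{1}{2} V_k^T (\cdot) V_k$ and using $V_k^T V_k = 2 I_{n_k}$ twice gives
\begin{equation*}
\tfrac{1}{2} V_k^T \widehat{L}_k^{\mathrm{full}} V_k = \tfrac{1}{4} (V_k^T V_k) L_k (V_k^T V_k) = L_k,
\end{equation*}
as required. The only delicate point in the whole argument is step two, where one must verify that the sign-sensitive projection $(\cdot)^-$ interacts correctly with the block structure; after that, the product telescopes by the trivial decomposition $X = X^+ + X^-$ and everything else is bookkeeping with $V_k^T V_k = 2 I$.
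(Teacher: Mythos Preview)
Your proof is correct and follows essentially the same mechanism as the paper's: both hinge on $V_k^T V_k = 2I$ and on the entrywise identity $X^+ + X^- = X$. The paper sandwiches by $V_k^T(\cdot)V_k$ immediately and reduces via the relation $V_k^T(V_k X)^- = X^- + X^+ = X$, whereas you first carry out the full $2\times 2$ block multiplication to obtain the stronger intermediate identity $\widehat{L}_k^{\mathrm{full}} = \tfrac{1}{2} V_k L_k V_k^T$ and only then conjugate. Your route is slightly more explicit and yields that sharper structural statement along the way, but the underlying idea is the same.
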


\begin{proof}
Using the relations $V_k^TV_k=2$ and $V_k^TV_k^-=1$
For the \textit{down} term of the Hodge Laplacian, we have
\begin{align*}
    V_k^T\widehat L_{k, down}^\mathrm{full}V_k &= V_k^T(V_k N_{k-1})^-N_{k-1}^*V_k^TV_k  \\
    &= V_k^TV_k^- N_{k-1}N_{k-1}^* 2 \\
    &= 2 N_{k-1}N_{k-1}^* \\
    &= 2 L_{k, down}\, , 
\end{align*}
which results in a non-lifted \textit{down} term in the complete Laplacian. 
For the up Laplacian, the same computation applies, thus we have the result.
\end{proof}
Then, from the form of the frustration operator acting in the nonlinear term corresponding to the \textit{up} Laplacian, we can only use the lift on the $k+1$ simplices to get the lifted Laplacian ~\eqref{laplacian-hat} of the main text.

\begin{proposition}
The frustrated simplicial Kuramoto model is independent on the orientation of the $k+1$ simplices.
\end{proposition}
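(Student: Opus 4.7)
The plan is to track how a change of orientation of the $(k+1)$-simplices propagates through the equation and to show that all non-trivial effects cancel thanks to the lift $V_{k+1}$ and the $\pm$-projection. I would first encode an arbitrary reorientation of the $(k+1)$-simplices by a diagonal sign matrix $S \in \{\pm 1\}^{n_{k+1}\times n_{k+1}}$, which acts on the coboundary data by $N_k \mapsto SN_k$ and $N_k^* \mapsto N_k^* S$, while leaving $N_{k-1}$, $N_{k-1}^*$ and $\alpha_k$ unchanged. This reduces the statement to the nonlinear up-term $(N_k^* V_{k+1}^T)^- \sin(V_{k+1} N_k \theta^{(k)} + \alpha_{k+1})$, since the linear and down contributions $-\alpha_k - N_{k-1}\sin(N_{k-1}^*\theta^{(k)})$ are manifestly $S$-independent.

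The core lemma is the intertwining relation $V_{k+1} S = P_S V_{k+1}$, where $P_S$ is the involutive permutation of $\{1,\dots,2n_{k+1}\}$ that swaps the $i$-th positive copy with its negative twin whenever $S_{ii} = -1$ and acts trivially otherwise. I would establish it by direct computation using the block structure of $V_{k+1}$, whose top block is $I$ and whose bottom block is $-I$: right-multiplying by $S$ flips the sign of the $i$-th row in both blocks, which is the same as permuting those two rows. Transposing then yields $SV_{k+1}^T = V_{k+1}^T P_S$.

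I would then cascade these identities through the up-term. The argument of the sine becomes $V_{k+1} S N_k \theta^{(k)} + \alpha_{k+1} = P_S V_{k+1} N_k \theta^{(k)} + \alpha_{k+1}$, while the prefactor becomes $(N_k^* S V_{k+1}^T)^- = (N_k^* V_{k+1}^T P_S)^- = (N_k^* V_{k+1}^T)^- P_S$, where the last equality uses that the entry-wise $\pm$-projection commutes with column permutations. Since sine acts entry-wise, $\sin(P_S x) = P_S \sin(x)$, and using $P_S^2 = I$ one recovers the original expression $(N_k^* V_{k+1}^T)^- \sin(V_{k+1} N_k \theta^{(k)} + \alpha_{k+1})$ after absorbing the natural relabelling $\alpha_{k+1} \mapsto P_S \alpha_{k+1}$ inside the sine.

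The hard part is not algebraic but definitional: it lies in correctly identifying how $\alpha_{k+1}$ itself transforms under reorientation. Since $\alpha_{k+1}$ is a cochain indexed by the lifted $(k+1)$-simplices, and the latter are labelled relative to a chosen orientation, flipping an orientation naturally swaps the positive and negative slots so that the physical frustration vector is relabelled by exactly $P_S$. For the natural choice of a scalar or otherwise $P_S$-invariant $\alpha_{k+1}$ the invariance is immediate; in the general case one must justify that a physically meaningful $\alpha_{k+1}$ is orientation-covariant in this permutation sense, which then cancels precisely against the permutation produced by the lift.
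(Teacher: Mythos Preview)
Your proposal is correct and follows essentially the same route as the paper: both arguments reduce to the intertwining relation $V_{k+1}S = P_S V_{k+1}$ (the paper states it for a single flip, you for an arbitrary diagonal sign matrix) and then use that the entry-wise sine and the $\pm$-projection commute with the row/column permutation $P_S$, so that $P_S$ cancels against itself. Your treatment is slightly more careful in that you make explicit the assumption on $\alpha_{k+1}$---the paper tacitly takes $\alpha_{k+1}$ to be a constant (hence $P_S$-invariant) vector, consistently with all its examples, whereas you note that for a general lifted cochain one needs $\alpha_{k+1}$ to transform covariantly under $P_S$.
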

\begin{proof}
The term of interest is
\begin{align*}
     L_\mathrm{up}(N_k, \theta^{(k)}) = (N_k^*V_{k+1})^- \mathrm{sin}\left (V_{k+1} N_k \theta^{(k)} + \alpha_{k+1}\right)\, .
\end{align*}
If we change the orientation of a $k+1$ simplex indexed by $i$ the corresponding coboundary operator $\widetilde N_j$ has $(\widetilde N_k)_i = -(N_k)_i$.
Then 
\begin{align*}
    V_{k+1}\widetilde N_k = P_iV_{k+1} N_k\, , 
\end{align*}
where $P_i$ permutes the rows $i$ and $2i$ of the lifted matrix.
Hence,  we have the orientation invariance
\begin{align*}
     L_\mathrm{up}(\widetilde N_k, \theta^{(k)}) &= (N_k^*V_{k+1}P_i)^- \mathrm{sin}\left (P_iV_{k+1} N_k \theta^{(k)} + \alpha_{k+1}\right)\\
     & = L_\mathrm{up}(N_k, \theta^{(k)}) \, , 
\end{align*}
as the permutation of rows commute with the point-wise sine function.
\end{proof}

\section{Simplicial order parameter}\label{simplicial_order}

Following~\cite{jadbabaie2004stability}, the generalisation of node order parameter to any graph~\eqref{order_naive} is obtained by rewriting the node order parameter for a complete graph with the graph incidence matrix
\begin{align*}
    n_0^2 R_{0, c}^2 &= \left |\sum_i \exp(i\theta_i)\right |^2\\
    &= \sum_i \exp(i\theta_i) \exp(-i\theta_i) + 2\sum_{i<j} \exp(i\theta_i) \exp(-i\theta_j) \\
    &= n + \sum_{i<j} \exp(i\theta_i) \exp(-i\theta_j) + \sum_{i>j} \exp(-i\theta_i) \exp(i\theta_j) \\
    &= n_0 + 2 1_{n_1}\cos(B_0\theta) \\
    &= n_0^2 - 2n_1 + 2 1_{n_1}\cos(B_0\theta) \\
    & = n_0^2 R_{0, g}^2
\end{align*}
where we used $n_1 = \frac12 (n_0^2-n_0)$, which holds for complete graphs.
We then chose a simpler normalization to write the order parameter as an average over edges as
\begin{align*}
     R_0^2 = \frac{1}{n_1} 1_{n_1}\cdot \cos(B_0\theta)\, , 
\end{align*}
with $1_{n_1}$ the constant vector of ones of dimension $n_1$, so that for full-synchronization, we still have $R_0^2 = 1$.

In order to obtain a proper generalization of the order parameter on simplicial complexes, one first has to notice that the order parameter generates the Kuramoto model as a gradient flow. 
A gradient flow is constructed from two elements: a convex potential function $H(x)$ for $x\in V$ with $V$ a vector space, and a gradient structure $K:V\to  V^*$ where $V^*$ is the dual of $V$. The gradient is a symmetric operator while an anti-symmetric operator would result in a Hamiltonian equation, with its dynamics restricted to the level sets of the potential function. 
The gradient flow is then given as 
\begin{align}
    \dot x =-K\frac{\delta H(x)}{\delta x}\, , 
\end{align}
where $\frac{\delta}{\delta x}: \mathcal F(V)\to V^*$ is a variational derivative acting on the space $\mathcal F(V)$ of functions of $V$ to its dual $V^*$. 

In our case, the potential function is $R_k^2$ defined in the main text in equation \eqref{order_general}, the variational derivative is simply the gradient with respect to $\theta$.
The gradient results in a chain, as
\begin{align*}
    W_{k-1}^{-1}N_{k-1}^*&: C^k \to C_{k-1} \quad \mathrm{and}\\
    W_{k+1}^{-1}N_k&: C^k \to C_{k+1}\, , 
\end{align*}
and the gradient structure $K$ is simply the weight matrix $W_k$ that converts the resulting chains to cochains.

\section{Diffusion simplicial Kuramoto}\label{diffusion_kuramoto}

All the derivations in this paper were done following the standard formulation of the Kuramoto model that corresponds to the consensus dynamics from the graph Laplacian.
This effect of this choice is actually only noticeable once one considers weighted graphs or simplicial complexes in a geometrical setting as presented here.
Indeed, without weights, both the diffusion and consensus dynamics are equivalent, as the graph Laplacian is symmetric.
Hence, there is an obvious formulation of Kuramoto model with the diffusion interpretation, that is obtained simply by acting with the Hodge Laplacian on the $\theta^{(k)}$ cochains from the right.
The diffusion simplicial Kuramoto model is then
\begin{align}
    \dot \theta^{(k)} = -\mathrm{sin}\left( \theta^{(k)}N_{k-1} \right)N_{k-1}^* - \mathrm{sin}\left(\theta^{(k)}N_k^* \right)N_k\, ,
    \label{diffusion_simplicial_kuramoto}
\end{align}
or, explicitly with the weight matrices,
\begin{align}
    \dot \theta^{(k)} &= -\mathrm{sin}\left (\theta^{(k)}B_{k-1} \right )W_{k-1} B_{k-1}^T W^{-1}_k\nonumber  \\
    &\qquad - \mathrm{sin}\left (\theta^{(k)}W_k B_k^T W_{k+1}^{-1}\right ) B_k \, ,
\end{align}
For $k=0$, the fully synchronized state is not the constant state as with the consensus dynamics but is proportional to $\mathrm{diag}(W_0^{-1})$, often given by the node degree for normalized graph Laplacian.

In order to include the frustration operator, one has to define it's 'dual' version acting on $N_k^*$ and we obtain the frustrated diffusion simplicial Kuramoto model 
\begin{align}
    \dot \theta^{(k)} = -\alpha_{k} -\mathrm{sin}\left( \theta^{(k)}N_{k-1} \right)N_{k-1}^* \nonumber \\
    - \mathrm{sin}\left(\theta^{(k)}N_k^* + \alpha_{k+1}\right)N_k\, .
    \label{frustrated_diffusion_simplicial_kuramoto}
\end{align}
Such systems require non-trivial weight matrices to be different from the simplicial Kuramoto model presented in the main text and could be of interest for future studies.

\section{Code availability}

The code to reproduce the figures is available on GitHub at \url{https://github.com/arnaudon/simplicial-kuramoto}.

\bibliographystyle{unsrt}
\bibliography{references}

\end{document}